\newtheorem{theorem}{Theorem}
\newtheorem{proof}{Proof}
\newtheorem{lemma}{Lemma}
\newtheorem{assumption}{Assumption}
\begin{document}

\begin{frontmatter}

\title{Optimal Control Selection over the Edge-Cloud Continuum} 

%\thanks[footnoteinfo]{Sponsor and financial support acknowledgment goes here.}

\author[unipd]{Xiyu Gu\thanksref{corrauth},}
\author[unipd]{ Matthias Pezzutto,} 
\author[unipd]{ Luca Schenato,} 
\author[uppsala]{Subhrakanti Dey}

\address[unipd]{Department of Information Engineering, University of Padova, Italy}                                  
\address[uppsala]{Department of Electrical Engineering, Uppsala University, Sweden} 

\thanks[corrauth]{Corresponding author. E-mail: xiyu.gu@studenti.unipd.it}

\begin{abstract}
The emerging computing continuum paves the way for exploiting multiple computing devices, ranging from the edge to the cloud, to implement the control algorithm.
Different computing units over the continuum are characterized by different computational capabilities and communication latencies, thus resulting in different control performances and advocating for an effective trade-off. 
To this end, in this work, we first introduce a multi-tiered controller and we propose a simple network delay compensator. 
Then we propose a control selection policy to optimize the control cost taking into account the delay and the disturbances. We theoretically investigate the stability of the switching system resulting from the proposed control selection policy. Accurate simulations show the improvements of the considered setup. 
\end{abstract}

\begin{keyword}
Cloud computing; Edge computing; Model Predictive Control
\end{keyword}

\end{frontmatter}

\section{Introduction}

Twenty years ago, the proliferation of shared high-performance computational units has started to move computing resources away from the user giving rise to the advent of cloud computing [\cite{velte2010cloud}]. In the opposite direction, in the last ten years, we have witnessed the return back of computational resources close to the user and the emergence of edge computing [\cite{shi2016edge}]. Indeed, although not able to implement as complex algorithms as the cloud, the edge can provide simpler solutions with a reduced delay to the user, thus providing another option for the computation offloading. Lately, a full spectrum of computing resources is appearing along the path from the cloud to the edge, often integrated into any network node. The resulting computation-communication ecosystem is known as computing continuum or cloud-edge continuum [\cite{al2024computing}]. To exploit its potential at best, computation offloading over the continuum needs to be efficiently orchestrated to trade-off between the computational power and the communication delay. This is critical for real-time decision making and control strategies since delays might not only deteriorate performances, but also jeopardize safety due to the underlying dynamical systems.

In the simplest case, a single computational unit from the computing continuum can be used to implement the control algorithm. This can be either the cloud, the edge on an arbitrary remote computing device as in the theory of NCS [\cite{pezzutto2024wireless}]. In this case, the main focus is on the robustness of the controller with respect to the communication delays. 
In a more general setup, the controller can be implemented over two devices. The work [\cite{liang2018control}] considers a local and a remote LQR controller, the works [\cite{pezzutto2021remote}][\cite{umsonst2024remote}] consider a local LQR and a remote MPC, while the work [\cite{skarin2020cloudhorizon}] considers a local and a remote MPC. In aforementioned works, the remote controller is applied as soon as it is available. 
The work [\cite{skarin2020cloudrate}] introduces a smooth transition between a remote MPC and a local LQR. In the works [\cite{ma2020exploring}][\cite{ma2022smart}], a selection logic based on the network conditions is used to choose between a remote MPC and an on-board LQR. In [\cite{li2022cloud}], a switching policy is used to swap from an initial open-loop sequence obtained by a cloud MPC to the closed-loop input from an edge MPC. 

In this work we consider the case where more than two computational resources are used. Ideally, we consider a controller on board, a controller on the edge device, and a controller on the cloud. We consider arbitrary control design with the underlying assumption that control accuracy and delay increase going further from the plant.
A suitable mechanism is introduced to compensate communication and computation delays. 
In order to effectively orchestrate the different computational devices, we introduce a novel control selection strategy which selects the best sequence taking into account the delay and the disturbances. We theoretically investigate the stability of the multi-tiered controller. Roughly speaking, we show that the plant is stable as long as the onboard controller is stabilizing and has ideal communications. Results can be generalized to any number of remote controllers over the computing continuum. Numerical simulations show the performance improvement compared to standard solutions.

\section{Problem formulation}

\subsection{Computing Continuum model}

We consider a three-tiered computation-communication architecture (Fig.~\ref{framework}) consisting of three computing units: a simple \textit{on-board} device physically located on the plant, an \textit{edge} computing device with good computational capabilities located in the same area of the plant, and a \textit{cloud} computing device with high computational capabilities located on a remote site.

We assume that the on-board device is connected to the sensors and actuators through ideal links. Conversely, the plant is connected to the edge through a wireless network and to the cloud through an internet connection provided by a heterogeneous communication system. Accordingly, communication latencies increase as the computing device is farther from the plant.

We assume that the on-board device can implement only simple algorithms (e.g. PID or LQR). Leveraging on their computational resources, we assume that the edge can implement more advanced algorithms (e.g. linear MPC) than the on-board device and that the cloud can implement more advanced algorithms than the edge (e.g. complex optimization algorithms). Ideally, we consider that control performance improves as more complex controller is used. 

%Leveraging on the computational resources available, the on-board device can implement only simple algorithms (e.g. PID or LQR), the edge device can implement more advance controllers (e.g. MPC or other optimization-based algorithms), while the cloud can implement sophisticated high-performing control algorithms (e.g. complex learning strategies or advanced optimization problems). 

﻿ 
%First, we consider a simple computing device with limited computational capabilities physically located on the plant, referred as on-board computing device. The on-board device is connected to the sensor and actuators through ideal links. %In practice, it is a dedicated electronic board, e.g. Arduino, Raspberry, or a proprietary board.  

﻿ 
%Second, we consider a proximity computing device with good computational capabilities located in the same area of the plant, referred as edge device. It is connected to the plant through a wireless network. The higher computational capabilities compared to the on-board device allow to implement more advance controllers, such as simple MPC or other optimization-based algorithms, ideally achieving better performances. 
%%However, the non-ideal wireless networks might introduce packet loss and delays in the control loop, possibly compromising the system evolution. Possible edge devices are computer connected to the network or on-premises edge, and possible networks are Wi-Fi or 5G.  

﻿ 
%Third, we consider a remote computing device with high computational capabilities located on a remote site, referred as cloud device.  Leveraging on the high computational power of the cloud, sophisticated high-performing control algorithms can be implemented, such as complex decision-making procedure, learning strategies, and advanced optimization problems.
%%Example of cloud are Amazon’s AWS, Microsoft’s Azure, IBM’s Bluemix, and Google’s Cloud

﻿ 
%However, the complex communication infrastructure introduces relevant delays in the loops of cloud-onboard and edge-onboard. In ideal cases, the communication is only marginally affected but, in harsh environments, long delays are possible and communication blackouts are present.  

\subsection{System model}
Consider a discrete-time nonlinear system
\begin{equation} \label{nonlinear model}
x(k+1) = f(x(k),u(k),w(k)),
\end{equation}
where $x(k) \in \mathbb{R}^n$ is the state, $u(k) \in \mathbb{R}^m$ is the input, $w(k) \in \mathbb{R}^n$ is the disturbance, and $f: \mathbb{R}^n \times \mathbb{R}^m \times \mathbb{R}^n \to \mathbb{R}^n$ is a nonlinear function. 

We assume that each computing device implements a predefined predictive controller with prediction horizon $N \in \mathbb{N}$. The control sequences provided by  cloud,  edge, and  on-board controller are respectively denoted as
\begin{align}
U_c(x) = \{u_0^c(x), \dots, u_{N-1}^c(x)\} = g_c(x), \label{eq:cloud_ctrl} \\
U_e(x) = \{u_0^e(x), \dots, u_{N-1}^e(x)\} = g_e(x), \label{eq:edge_ctrl} \\ 
U_b(x) = \{u_0^b(x), \dots, u_{N-1}^b(x)\} = g_b(x). \label{eq:board_ctrl} 
\end{align}

The plant is equipped with a buffer of length $N$ to store a backup control sequence at each sampling time $k$
\begin{equation}
B(k)=\{b_0(k),\dots, b_{N-1}(k)\}.
\end{equation}
We stress that each controller is standalone and the three devices provide three different alternatives of the input that can be directly applied by the actuator.

\subsection{Network model}

During each sampling period, the plant communicates with the edge device and the cloud device.  Let $\mathcal{X}_c(k)=\{x(k),B(k)\}$ be the packet transmitted by the plant to the cloud device at time $k$ and
$\mathcal{X}_e(k)=\{x(k),B(k)\}$ be the packet transmitted by the plant to the edge device at time $k$. We referred to them as uplink transmissions. Moreover, let
$\mathcal{U}_c(k)$ denote the packet transmitted by the cloud device to the plant at time $k$ and
$\mathcal{U}_e(k)$ denote the packet transmitted by the edge device to the plant at time $k$. We referred to them as downlink transmissions. The content of $\mathcal{U}_c(k)$ and $\mathcal{U}_e(k)$ is defined later.

The communications are not ideal but affected by random delays and packet losses. We denote as $d(\mathcal{X})$ the communication delay of a packet $\mathcal{X}$.
We denote as $d_c(k)$ and $d_e(k)$ the delay incurred in the transmission of $\mathcal{X}_c(k)$ and $\mathcal{X}_e(k)$
Moreover we introduce the age of information $a_c(k)$ at the cloud at time $k$ as the number of steps since the generation of the most recent packet received at the cloud. Formally 
$a_c(k)= \min \{a \ \text{s.t.} \ d(\mathcal{X}_c(k-a))\leq a \} $. It follows that $\mathcal{X}_c(k-a_c(k))$ is the most recent packet at time $k$ received by the cloud. The age of information $a_e(k)$ at the edge is defined similarly. 
%This setup is represented in Fig. \ref{prediction}. 

%\begin{figure}
%	\centering
%	\includegraphics[width =0.9\columnwidth]{fig/prediction.jpg}
%	\caption{Illustration of information bias due to  time delay.}
%	\label{prediction} 
%\end{figure}

%The cloud, edge, and on-board devices implement three different controllers characterized by different computational complexity. 
We assume that each device is able to execute the corresponding controller in a negligible amount of time and we focus on the effect of the communication channels. Nevertheless, the running time of the controllers can be included in the communication delay.

\subsection{Objective}
The objective is to design a control selection strategy to select the input $u(k)$ to apply among the inputs provided by the three computing units in order to optimize the control cost
$\mathcal{V}(x(0),U) = \lim\nolimits_{N\rightarrow \infty } \frac{1}{N}\sum_{k=0}^{N-1} \mathcal{C}(x(k),u(k))$ 
%with $\mathcal{C}(x,u)$ a general stage cost function 
taking into account the different communication delays. 

%In particular, we aim to select the applied input sequence $U=\{u(0),u(1), \dots \}$ to minimize the infinite horizon cost 
%\begin{equation}\label{minimal cost}
%\mathcal{V}(x(0),U) = \lim\limits_{N\rightarrow \infty } \frac{1}{N}\sum_{k=0}^{N-1} \mathbb{E}[\mathcal{C}(x(k),u(k))]
%\end{equation}
%However, since the infinite-horizon problem is often intractable, we propose to find at each time instant the finite input %sequence $U=\{u_0,\dots, u_{N-1}\}$ that minimizes the predicted finite horizon cost $\mathcal{V}_N(x(k),U)$ defined as 
%\begin{equation}\label{finite V_N}
%\mathcal{V}_N(x,U) = \sum_{i=0}^{N-1} \mathcal{C}_i(\hat{x}(i|x,U),u_i) + \mathcal{C}_N(\hat{x}(N|x,U))
%\end{equation}
%where $\mathcal{C}_i(x,u) : \mathbb{R}^n \times \mathbb{R}^m \rightarrow \mathbb{R}_+$ is a general stage cost function, $\mathcal{C}_N(x) : \mathbb{R}^n \rightarrow \mathbb{R}_+$ is a general terminal cost, and $\hat{x}(i|x,U)$ is the nominal predicted state $i$ steps ahead starting from state $x$ with input sequence $U$. It can be computed recursively as 
%\begin{equation}
%\hat{x}(i\!+\!1|x,U) = f(\hat{x}(i|x,U), u_i, 0)
%\end{equation}
%starting from $\hat{x}(0|x,U)=x$.

\begin{figure}
	\centering
	\includegraphics[width=0.8\columnwidth]{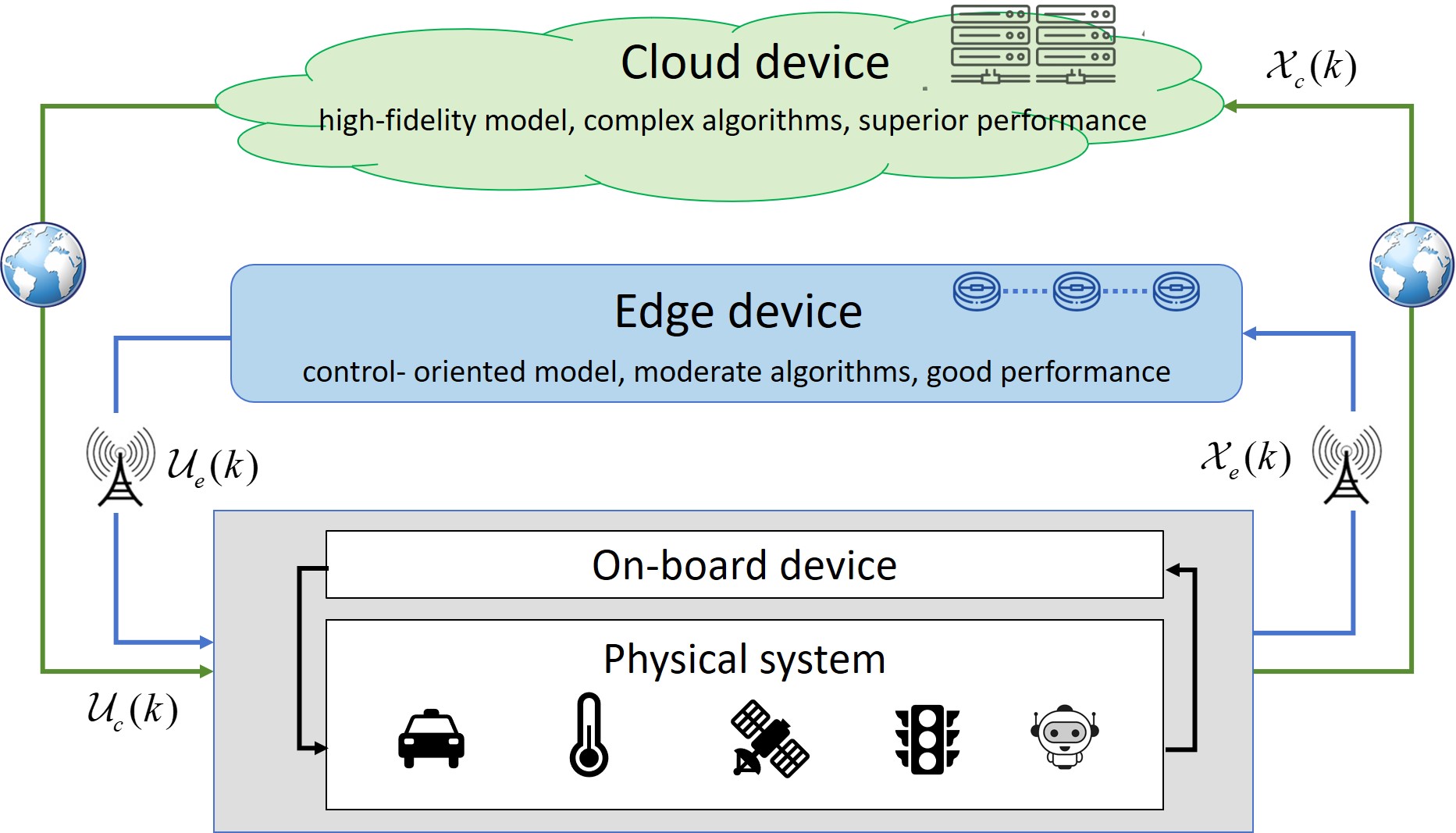}
	\caption{Computing Continuum}
	\label{framework} 
\end{figure}

\section{Proposed strategy}\label{proposed strategy}

\subsection{Network compensator}

Consider the cloud device.  
Due to the non-ideal uplink channel, the most recent packet received by the cloud device at time $k$ is $\mathcal{X}_c(k-a_c(k))$, that includes the state $x(k-a_c(k))$ and buffer $B(k-a_c(k))$. 
Due to the non-ideal downlink channel, the packet $\mathcal{U}_c(k)$ computed by the cloud at time instant $k$ will arrive at the plant with a delay $d_c(k)$.
%Notice that the controller knows the value of $a_c(k)$, however, $d_c(k)$ could only be detected on plant. 

%In order to compensate the effects of the communication delay on the downlink, we first estimate the current state based on nominal nonlinear model starting from state $x(k-a_c(k))$ and with inputs extracted from $B(k-a_c(k))$. Then, in order to compensate the effects of the communication delay on the downlink, we predict the state $d_c$ steps ahead relying on nominal nonlinear model starting from current estimate and with inputs extracted from $B(k-a_c(k))$. Finally, the control sequence is obtained by applying the given control law (\ref{eq:cloud_ctrl}) on the predicted state. 

The proposed strategy can be formalized as follows. First, we estimate the current state as
\begin{equation}
\hat{x}(k) = \hat{x}(a_c(k)\,|\,x(k\!-\!a_c(k)),B(k\!-\!a_c(k))) 
\end{equation}
where $\hat{x}(i|x,U)$ is the nominal predicted state $i$ steps ahead starting from state $x$ with input sequence $U=(u_0, \dots, u_{N-1})$. Formally, it can be computed recursively 
\begin{equation}\label{eq:nominal}
\hat{x}(i\!+\!1|x,U) = f(\hat{x}(i|x,U), u_i, 0)
\end{equation}
starting from $\hat{x}(0|x,U)=x$ for $i\leq N$. Essentially, we estimate the current state based on the nominal nonlinear model starting from state $x(k\!-\!a_c(k))$ and with inputs extracted from $B(k\!-\!a_c(k))$.
Second, based on the estimated state $\hat{x}(k)$, we predict the state $D_c$ step ahead, where $D_c$ is tuning parameters. Formally
\begin{equation}\label{ac+dc}
\hat{x}(k\!+\!D_c) = \hat{x}(D_c\,|\,\hat{x}(k),B_f(k\!-\!a_c(k))) 
\end{equation}
where $B_f(k\!-\!a_c(k))$ is buffer $B(k\!-\!a_c(k))$ without the first $a_c(k)$ elements, that are those used in the estimation of $\hat{x}(k)$. Note that we must have $D_c+a_c(k)\leq N$.

It is possible to show that
\begin{equation}
\hat{x}(k\!+\!D_c) = \hat{x}(D_c\!+\!a_c(k)\,|\,x(k\!-\!a_c(k)),B(k\!-\!a_c(k))) 
\end{equation}
Then, $\hat{x}(k\!+\!D_c)$ is regarded as the initial state of the given cloud controller, obtaining
\begin{equation}\nonumber
\begin{aligned}
&{U_c}(k + {D_c})\\
&~~~~= \{u_0^c(k + {D_c}), \ldots ,u_{N - 1}^c(k + {D_c})\} = {g_c}(\hat x(k + {D_c}))
\end{aligned}
\end{equation}
Finally, the control sequence is packetized as
\begin{equation}
\mathcal{U}_c(k)=U_c(k+D_c)
\end{equation}
and transmitted to the plant.

The same compensation scheme is also applied at the edge device. Based on the latest received packet $\mathcal {X}_e (k-a_e (k))=\{x(k-a_e(k)),B(k-a_e(k))\}$, the state is predicted $D_e$ steps ahead based on the nonlinear model as
\begin{equation}\label{ae+de}
\hat{x}(k+D_e) = \hat{x}(D_e\!+\!a_e(k)\,|\,x(k\!-\!a_e(k)),B(k\!-\!a_e(k))) 
\end{equation}
Correspondingly, the control sequence is generated as
\begin{equation}\nonumber
\begin{aligned}
&{U_e}(k + {D_e})\\
&~~~~= \{u_0^e(k + {D_e}), \ldots ,u_{N - 1}^e(k + {D_e})\} = {g_e}(\hat x(k + {D_e}))
\end{aligned}
\end{equation}
Finally, the control sequence is packetized as
\begin{equation}
\mathcal{U}_e(k)=U_e(k+D_e)
\end{equation}
and transmitted to the plant.

The numbers of steps $D_c$ and $D_e$ are a critical design parameter. On one hand, if $d(\mathcal{U}_c(k)) \leq D_c$, the control sequence $U_c(k\!+\!D_c)$ to be used at time instant $k\!+\!D_c$ is arrived and can be applied. Otherwise, if $d_c(\mathcal{U}(k)) > D_c$, the sequence $U_c(k\!+\!D_c)$ is not available at time instant $k\!+\!D_c$  and cannot be used. 
So, a larger $D_c$ increases the robustness to delays. On the other hand, the prediction accuracy deteriorates due to the possible disturbances and uncertainty on the real sequence of applied inputs. So, a larger $D_c$ causes a loss of performances due to the open-loop inaccurate prediction. The parameter $D_c$ needs to be chosen in order to trade-off the number of outdated packets and the prediction inaccuracy. 
A possible simple rule to choose the parameter $D_c$ is to enforce that the probability of outdated packets $Pr(d(\mathcal{U}_c(k)) > D_c)$ is below a certain desired threshold based on the distribution of the delays $d_c(k)$. The following straightforward lemma outlines this rule for the design of $D_c$ and $D_e$ for three relevant delay distributions. 
\begin{lemma}
Let $d(k)$ be a random delay process. For any $\rho > 0$ there exists a $D>0$ such that $Pr(d(k) > D) < \rho$. Moreover:
\begin{itemize}[topsep=-4pt]
\item If $d(k) \sim \mathrm{Exp}(\lambda)$, then $D \ge  \!-\! \frac{{\ln \rho }}{\lambda }$
\item If $d(k) \sim \mathcal{N}(\mu,\sigma)$, then $D \ge \mu  + \sigma {\Phi ^{ \!-\! 1}}\left( {1 - \rho } \right)$
\item If $d(k) \sim \mathrm{Log}\mathcal{N}(\mu,\sigma)$ then $D \ge \exp \left( {\mu  \!+\! \sigma {\Phi ^{ - 1}}\left( {1 \!-\! \rho } \right)} \right)$ 
\end{itemize}
where $\Phi(\cdot)$ is cumulative distribution function of $\mathcal{N}(0,1)$.
\end{lemma}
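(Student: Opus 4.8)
The plan is to handle the general bound first and then specialize to each of the three distributions. For the general statement, I would invoke the fact that for any nonnegative (or real-valued) random variable with a proper cumulative distribution function $F_d$, we have $\lim_{D\to\infty} \Pr(d(k) > D) = \lim_{D\to\infty}(1 - F_d(D)) = 0$. Hence, given $\rho>0$, there exists a finite $D$ with $1 - F_d(D) < \rho$, i.e. $\Pr(d(k)>D) < \rho$. Equivalently, one can simply take $D$ to be any value exceeding the $(1-\rho)$-quantile of $d(k)$; monotonicity of $F_d$ then gives the claim. This is the one-line core of the lemma, and it is not the hard part.

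For the exponential case, I would write $\Pr(d(k) > D) = e^{-\lambda D}$ for $d(k)\sim\mathrm{Exp}(\lambda)$, set $e^{-\lambda D} < \rho$, take logarithms, and solve to get $D > -\frac{\ln\rho}{\lambda}$ (the sign works out because $\rho<1$ makes $\ln\rho<0$). For the Gaussian case, $d(k)\sim\mathcal{N}(\mu,\sigma)$ gives $\Pr(d(k) > D) = 1 - \Phi\!\left(\frac{D-\mu}{\sigma}\right)$; requiring this to be less than $\rho$ is equivalent to $\Phi\!\left(\frac{D-\mu}{\sigma}\right) > 1-\rho$, and since $\Phi$ is strictly increasing this is equivalent to $\frac{D-\mu}{\sigma} > \Phi^{-1}(1-\rho)$, i.e. $D > \mu + \sigma\Phi^{-1}(1-\rho)$. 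For the log-normal case, $d(k)\sim\mathrm{Log}\mathcal{N}(\mu,\sigma)$ means $\ln d(k)\sim\mathcal{N}(\mu,\sigma)$, so $\Pr(d(k) > D) = \Pr(\ln d(k) > \ln D) = 1 - \Phi\!\left(\frac{\ln D-\mu}{\sigma}\right)$; the same manipulation as in the Gaussian case yields $\ln D > \mu + \sigma\Phi^{-1}(1-\rho)$, hence $D > \exp\!\left(\mu + \sigma\Phi^{-1}(1-\rho)\right)$.

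A couple of minor technical remarks I would fold in rather than belabor: strictly speaking $d(k)$ is an integer-valued delay in the paper, so for the continuous distributions one should read the stated formulas as the real-valued thresholds whose ceiling gives an admissible integer $D$; the inequalities remain valid because the tail probability is non-increasing in $D$. Also, for the Gaussian case one should implicitly assume $\mu,\sigma$ are such that negative delays have negligible probability, or simply note that the bound on the upper tail is unaffected by the behavior on the negative axis. Since the lemma is explicitly called "straightforward," I would keep the writeup to essentially these few lines and not dwell on edge cases.

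Honestly, there is no real obstacle here — the only thing requiring any care is tracking signs and the direction of inequalities when inverting the monotone functions $\ln(\cdot)$, $\exp(\cdot)$, and $\Phi(\cdot)$, and remembering that $\rho<1$ so that $\ln\rho<0$ and $\Phi^{-1}(1-\rho)$ is finite. The "hard part," such as it is, is merely being careful that each displayed bound is stated as a sufficient condition ($D\ge\cdots$) rather than an exact threshold, which matches the "$\ge$" in the statement.
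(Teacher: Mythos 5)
Your proof is correct and is precisely the standard tail-quantile computation the authors intend: the paper labels the lemma ``straightforward'' and omits its proof entirely, so there is nothing to diverge from. Your two minor remarks (strict inequality $D>\cdots$ versus the paper's $D\ge\cdots$, and the sign/monotonicity bookkeeping when inverting $\Phi$, $\ln$, and $\exp$) are exactly the right points of care, and the argument needs nothing more.
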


\subsection{Selection policy}

%Edge and cloud controllers exploit their computational capabilities to implement advanced controllers while mitigating the communication delays through the compensation scheme introduce above. 
%Onboard controller, integrated within plant-level devices, can provide timely feedback and remain unaffected by communication disruptions. While it ensures that the plant can operate normally in challenging communication environments, it might achieve sub-optimal performances compared to the edge and cloud controller. 
%Because of the different computational powers and communication delays, it is not clear which controller should be used. In this section, we propose a selection policy to decide which control sequence to apply based on the expected control performances.

Preliminarily, we introduce the binary variable $\gamma_c(k)$ that is equal to $1$ if the control sequence $U_c(k)$ is available at the on-board unit at time instant $k$ and $0$ otherwise. Since $U_c(k)$ is included in the packet $\mathcal{U}_c(k-D_c)$, $\gamma_c(k)=1$ if $d(\mathcal{U}_c(k-D_c))\leq D_c$, namely if the communication delay of the packet is smaller than $D_c$. We define the binary variable $\gamma_e(k)$ related to $U_e(k)$ in the same way. 

We introduce an auxiliary temporary buffer of length $N$. The auxiliary buffer at time $k$ is defined as
\begin{equation}\label{B+}
B^+(k)=\{b_1(k-1),\dots, b_{N-1}(k-1), u^b_0(\hat{x}(k+N))\}
\end{equation}
with $\hat{x}(k+N)=\hat{x}(N|x(k),b_1(k-1),\dots, b_{N-1}(k-1))$. Essentially, the auxiliary buffer is derived from the buffer at the previous time instant, shifted of 1 step, and padded with a control input derived from the on-board control law. 

We introduce the auxiliary cost function
\begin{equation}\label{finite V_N}
\mathcal{V}_N(x,U) = \sum\nolimits_{i=0}^{N-1} \mathcal{C}(\hat{x}(i|x,U),u_i) + \mathcal{C}_N(\hat{x}(N|x,U))
\end{equation}
where $\mathcal{C}(x,u) $ is the stage cost and $\mathcal{C}_N(x)$ is the terminal cost.
We propose to update the buffer as
\begin{equation}\label{buffer_update}
\begin{aligned}
&B(k)= \arg \min _U \, \mathcal{V}_N(x(k),U)\quad \\
&{\rm{s}}{\rm{.t}}{\rm{.}}\quad U \in \left\{ {{\gamma _c}(k){U_c}(k),\;{\gamma _e}(k){U_e}(k),\;{U_b}(k),\;{B^ + }(k)} \right\}
\end{aligned}
\end{equation}
where, with a little abuse of notation, if $\gamma(k)=0$ then $\gamma(k)U(k)=\emptyset$. Essentially, the buffer stores the future candidate control sequence that achieves the smallest finite horizon cost among those available at the plant. 
Finally, the applied input is taken as the first element in the buffer
\begin{equation}\label{control action}
u_k = b_0(k)
\end{equation}
while the measured state and the buffer are packetized in $\mathcal{X}_c(k)$ and $\mathcal{X}_e(k)$ and transmitted to the cloud and the edge, respectively. 

A potential benefit of the proposed control framework comes from the auxiliary buffer $B^+(k)$.
In fact, when control packets from the cloud or the edge device are unavailable, $B^+(k)$ provides a possible control sequence for the plant obtained as an extension of the previous optimal control sequence.
At the same time, the proposed strategy is able to seemingly switch to the real-time feedback provided by the on-board controller. In fact, when the control sequence stored in the buffer is outdated and performances start deteriorating, the on-board controller is activated and a minimum level of control performances is guaranteed. 
In this way, the proposed framework is able to organically manage the available computing devices to optimize the control performance taking into account the communication delay. Note that the proposed framework can be immediately generalized to the case of any arbitrary number of controllers.

\subsection{Stability analysis}

The applied input is selected from three different controllers and the switching policy is affected by the stochastic delays. 
Moreover, prediction errors jeopardize guarantees on the remote controller. For these reasons, convergence might be not guaranteed.
In this subsection, we show how to guarantee the stability under the proposed strategy. 
The following technical assumptions are made.

\begin{assumption}\label{disturbance bounded}
	The disturbance $w$ is bounded within a compact set $W \subset {\mathbb{R}^n}$, i.e. $w \in W \subset {\mathbb{R}^n}$.
\end{assumption}

\begin{assumption}\label{Lipschitz of f}
	Model $f(x,u,0)$ is Lipschitz continuous with 
	$||f\left( {x_i,{u_i},0} \right) - f\left( {x_j,{u_j},0} \right)|| \le {\cal L}_{f,x}||x_i - x_j|| + {\cal L}_{f,u}||{u_i} - {u_j}||, \forall x_i,x_j\in \mathbb{R}^n, u_i,u_j \in \mathbb{R}^m$, with Lipschitz constants ${\cal L}_{f,x} \ge 0,{\cal L}_{f,u} \ge 0$.
	Moreover, $||f\left( {{x},{u},{w}} \right) - f\left( {{x},{u},0} \right)|| \le \varepsilon, \forall x\in \mathbb{R}^n, u \in \mathbb{R}^m, w \in W$.
\end{assumption}

\begin{assumption}\label{Lipschitz of cost}
	Stage cost $\mathcal{C} (x,u)$ is Lipschitz continuous in $x$, that is $||{{\cal C}}\left( {x_p,u_p} \right) - {{\cal C}}\left( {x_q ,u_q} \right)|| \le {{\cal L}_{\cal C}}||x_p - x_q||$, with Lipschitz constant ${\cal L}_{\cal C}\ge 0$.	
\end{assumption}

\begin{assumption}\label{terminal assumption}
	The terminal cost $\mathcal{C}_N(x)$ in (\ref{finite V_N}) satisfies
	%\begin{equation}\nonumber
	%\begin{aligned}
	%&{{\cal C}_N}\left( {\hat x\left( {N|x\left( k \right),U\left( k \right)} \right)} \right)\\
	%&~~\ge {{\cal C}_{N - 1}}\left( {\hat x\left( {N - 1|x\left( {k + 1} \right),U\left( {k + 1} \right)} \right)} \right)\\
	%&~~+ {{\cal C}_N}\left( {\hat x\left( {N|x\left( {k + 1} \right),U\left( {k + 1} \right)} \right)} \right) + \varpi \left( {w\left( k \right)} \right)
	%\end{aligned}
	%\end{equation}
	\begin{equation}\nonumber
	\begin{aligned}
	{{\cal C}_N} \left( x \right) \; \ge {{\cal C}}\left( {x,u_0^b(x)} \right) + {{\cal C}_N}\left( {f\left( {x,u_0^b(x)} \right)} \right), \forall x\in\mathbb{R}^n
	\end{aligned}
	\end{equation}
\end{assumption}

Assumptions \ref{disturbance bounded} and \ref{Lipschitz of f} are typical assumptions on the system dynamics. Assumptions \ref{Lipschitz of cost} and \ref{terminal assumption} prescribe how to select the on-board controller and the terminal cost. Essentially, it is required that the on-board controller robustly stabilizes the system and the terminal cost in the selection policy is a suitable Lyapunov-like function. Similar assumptions are used [\cite{chisci2001systems}][\cite{rawlings2017model}].
%The following theorem establishes the closed-loop system satisfies input-to-state stability (ISS) under the proposed control strategy.

\begin{theorem}\label{theorem 1}
Consider system (\ref{nonlinear model}) under proposed control selection (\ref{buffer_update}). Suppose that Assumptions \ref{disturbance bounded}, \ref{Lipschitz of f}, \ref{Lipschitz of cost}, \ref{terminal assumption} hold, then the closed-loop system is Input-to-State stable (ISS), i.e., there exist a class $\mathcal K \mathcal L$ function $\beta$ and a class $\mathcal K $ function $\sigma$ such that
\begin{equation}
||x\left( k \right)|| \le \beta \left( {||x\left( {{k_0}} \right)||,k - {k_0}} \right) + \sigma \left( \varepsilon  \right).
\end{equation}	
where $k_0$ is the initial step and $\varepsilon$ is the disturbance bound. %\textcolor{red}{Can we use $\varepsilon$?}
\end{theorem}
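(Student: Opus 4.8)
The plan is to use the optimal value of the auxiliary cost along the closed loop, $V(k) := \mathcal{V}_N(x(k),B(k))$, as an ISS--Lyapunov function, adapting the classical ``shift the horizon and append the terminal controller'' argument of robust MPC [\cite{chisci2001systems}][\cite{rawlings2017model}] so as to absorb (i) the stochastic switching among the three candidate sequences and (ii) the open--loop prediction error caused by the disturbance. The switching is in fact handled ``for free'': since $B^+(k+1)$ always belongs to the feasible set of (\ref{buffer_update}) at time $k+1$ (as does $U_b(k+1)$), optimality of $B(k+1)$ gives $V(k+1)\le \mathcal{V}_N(x(k+1),B^+(k+1))$ no matter which of $U_c$, $U_e$, $U_b$, $B^+$ attains the minimum.

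Next I would bound $\mathcal{V}_N(x(k+1),B^+(k+1))$ by the cost of the \emph{nominal} shifted sequence, namely the one obtained from $B(k)$ by dropping $b_0(k)$, shifting, and appending $u_0^b$ evaluated at the nominal predicted terminal state $\hat{x}_N:=\hat{x}(N|x(k),B(k))$, evaluated along the nominal trajectory starting from $f(x(k),b_0(k),0)$. Along that trajectory the first $N-1$ stage costs coincide with stage costs $1,\dots,N-1$ of $V(k)$, while the new last stage cost plus the new terminal cost equal $\mathcal{C}(\hat{x}_N,u_0^b(\hat{x}_N)) + \mathcal{C}_N(f(\hat{x}_N,u_0^b(\hat{x}_N),0))$, which by Assumption \ref{terminal assumption} is at most $\mathcal{C}_N(\hat{x}_N)$ --- precisely the terminal term of $V(k)$. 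Hence the nominal shifted cost is at most $V(k)-\mathcal{C}(x(k),b_0(k))$.

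The core estimate is to control the gap between $\mathcal{V}_N(x(k+1),B^+(k+1))$ and this nominal shifted cost. Both evaluations use the same inputs $b_1(k),\dots,b_{N-1}(k)$ over the first $N-1$ steps but are initialized at $x(k+1)$ versus $f(x(k),b_0(k),0)$, which by Assumption \ref{Lipschitz of f} differ by at most $\varepsilon$; propagating this mismatch through the $\mathcal{L}_{f,x}$--Lipschitz nominal dynamics bounds the $i$-th state discrepancy by $\mathcal{L}_{f,x}^{\,i}\varepsilon$, and Assumption \ref{Lipschitz of cost} --- together with (local) Lipschitz continuity of $\mathcal{C}_N$ and of the simple on--board law $g_b$, consistent with Assumption \ref{terminal assumption} and with $g_b$ being a PID/LQR--type map --- then bounds the accumulated cost gap by $L\varepsilon$ for an explicit constant $L=L(\mathcal{L}_{\mathcal{C}},\mathcal{L}_{f,x},\mathcal{L}_{f,u},N)$ (a finite sum, since the horizon is finite). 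Chaining the three steps yields the dissipation inequality
\begin{equation}\nonumber
V(k+1)\;\le\; V(k)-\mathcal{C}(x(k),b_0(k))+L\varepsilon .
\end{equation}

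Finally, under the usual positive--definiteness of the costs one has $\mathcal{C}(x,u)\ge\alpha_1(\|x\|)$ for some $\alpha_1\in\mathcal{K}_\infty$, hence $V(k)\ge\alpha_1(\|x(k)\|)$, and $V(k)=\mathcal{V}_N(x(k),B(k))\le\mathcal{V}_N(x(k),U_b(k))\le\alpha_2(\|x(k)\|)$ for some $\alpha_2\in\mathcal{K}_\infty$, again exploiting that $U_b(k)$ is always an admissible candidate. The inequality above then reads $V(k+1)\le V(k)-\alpha_1(\|x(k)\|)+L\varepsilon$, a standard discrete--time ISS--Lyapunov inequality, and the claimed bound $\|x(k)\|\le\beta(\|x(k_0)\|,k-k_0)+\sigma(\varepsilon)$ follows from the standard comparison/ISS--Lyapunov theorem [\cite{rawlings2017model}]. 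The main obstacle I anticipate is the core estimate: carefully handling the terminal ingredients under the disturbance perturbation --- in particular that the last element of $B^+(k+1)$ is the on--board law evaluated at a \emph{perturbed} predicted terminal state --- which is where Lipschitz regularity of $\mathcal{C}_N$ and $g_b$ slightly beyond the stated assumptions enters; a secondary point is establishing the upper bound $V(k)\le\alpha_2(\|x(k)\|)$ globally rather than only in a neighborhood of the origin.
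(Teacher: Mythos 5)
Your proposal follows essentially the same route as the paper's proof: take $\mathcal{V}_N(x(k),B(k))$ as the candidate ISS--Lyapunov function, exploit that the shifted-and-padded sequence $B^+(k+1)$ is always admissible in (\ref{buffer_update}) so that optimality absorbs the switching, propagate the one-step disturbance mismatch $\|x(k+1)-\hat x(1|k)\|\le\varepsilon$ through the $\mathcal{L}_{f,x}$-Lipschitz nominal dynamics to get the $\eta(\varepsilon)$ term, and use Assumption \ref{terminal assumption} to cancel the terminal contributions. The two caveats you flag --- that handling the perturbed terminal state requires Lipschitz-type regularity of $\mathcal{C}_N$ (and of $g_b$) beyond the stated assumptions, and that positive definiteness of the costs is needed for the $\mathcal{K}_\infty$ sandwich bounds --- are genuine: the paper's proof uses both implicitly without stating them, so your version is, if anything, the more careful rendering of the same argument.
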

\begin{proof}	
As stated in \cite{sontag2008input}, \textcolor{black}{a system is ISS if it admits an ISS-type Lyapunov function,} i.e., if there exists a positive definite function $ \mathcal{V}_N(x(k)),~\forall k\ge 0$ such that  
\begin{align}\color{red}
&{\alpha _1}\left( {x\left( k \right)} \right) \le  \mathcal{V}_N\left( {x\left( k \right)} \right) \le {\alpha _2}\left( {x\left( k \right)} \right)\\ 
& \mathcal{V}_N\left( {x\left( {k + 1} \right)} \right) -  \mathcal{V}_N\left( {x\left( k \right)} \right) \le  - {\alpha _3}\left( {x\left( k \right)} \right) + \eta \left( \varepsilon  \right)
\end{align}	
where $\alpha_i,~i=1,2,3$ are class $\mathcal{K}_\infty$ functions, $\eta$ belongs to class $\mathcal{K}$ function. 
In what follows, we show $ \mathcal{V}_N\left( {x\left( k \right),B(k)} \right)$ is an ISS-type Lyapunov function.

From (\ref{finite V_N}) we know 
\begin{align}
\mathcal C\left( {\hat x\left( {0,U} \right),{u_0}} \right) \le \mathcal{V}_N\left( {x\left( k \right),B(k)} \right) \le \mathcal V_N\left( {x\left( k \right),U_b\left( k \right)} \right)
%\mathcal C\left( {x\left( k \right),u_0^b\left( k \right)} \right)
\end{align}
and $\mathcal C(x(k),u(k))$, $\mathcal V_N\left( {x\left( k \right),U} \right)$ are a class $\mathcal{K}_\infty$ functions.

At step $k$, ${{\cal V}_N}(x(k),B(k))=\min \mathcal{V}_N(x(k),U)$ based on (\ref{buffer_update}).
For the next step $k+1$, it obtains from (\ref{B+}),
\begin{equation}
\begin{aligned}
&{{\cal V}_N}\left( {x(k + 1),{B^ + }\left( {k + 1} \right)} \right)\\
&~~= \sum\limits_{i = 0}^{N - 2} {\cal C} \left( {\hat x\left( {i|x\left( {k + 1} \right),{B^ + }\left( {k + 1} \right)} \right),{b_{i + 1}}} \right)\\
&~~+ {\cal C}\left( {\hat x\left( {N - 1|x\left( {k + 1} \right),{B^ + }\left( {k + 1} \right)} \right),u_0^b(\hat x(k + N))} \right)\\
&~~+ {{\cal C}_N}\left( {\hat x\left( {N|x\left( {k + 1} \right),{B^ + }\left( {k + 1} \right)} \right)} \right),
\end{aligned}
\end{equation}
	following
	\begin{equation}\label{V(k+1)-V*(k)}
	\begin{aligned}
	&{{\cal V}_N}\left( {x(k + 1),{B^ + }\left( {k + 1} \right)} \right) - {{\cal V}_N}\left( {x(k),B(k)} \right)\\
	&~~= \sum\limits_{i = 0}^{N - 2} {{\cal C}\left( {\hat x\left( {i|x\left( {k + 1} \right),{B^ + }\left( {k + 1} \right)} \right),{b_{i + 1}}} \right)} \\
	&~~- \sum\limits_{i = 1}^{N - 1} {{\cal C}\left( {\hat x\left( {i + 1|x\left( k \right),B\left( k \right)} \right),{b_{i + 1}}} \right)} \\
	&~~- {\cal C}\left( {x\left( k \right),{b_0}} \right) - {{\cal C}_N}\left( {\hat x\left( {N|x\left( k \right),B\left( k \right)} \right)} \right)\\
	&~~+ {\cal C}\left( {\hat x\left( {N - 1|x\left( {k + 1} \right),B^+(k+1)),u_0^b(\hat x(k + N)} \right)} \right)\\
	&~~+ {{\cal C}_N}\left( {\hat x\left( {N|x\left( {k + 1} \right),{B^ + }\left( {k + 1} \right)} \right)} \right)
	\end{aligned}
	\end{equation}
	According to the Assumptions \ref{disturbance bounded}, \ref{Lipschitz of f}, it derives
	\begin{equation}
	\begin{aligned}\label{f-f}
	||f\left( {\hat x\left( {i|k + 1} \right),{b_i},0} \right) - f\left( {\hat x\left( {i + 1|k} \right),{b_i},0} \right)||\\
	%\le \mathop \prod \limits_{i = 1}^{N - 2} {\cal L}_{f,x}^i|| x\left( {k + 1} \right) - \hat x\left( {1|k} \right)||
	\color{black} \le {\cal L}_{f,x}^{N-2}|| x\left( {k + 1} \right) - \hat x\left( {1|k} \right)|| 
	\end{aligned}
	\end{equation}
	
	Based on Assumption \ref{Lipschitz of cost} and (\ref{f-f}), we rewrite (\ref{V(k+1)-V*(k)})  as
	\begin{equation}
	\begin{aligned}
	&{{\cal V}_N}\left( {x(k + 1),{B^ + }\left( {k + 1} \right)} \right) - {{\cal V}_N}\left( {x(k),B(k)} \right)\\
	&~~\le \sum\limits_{i = 0}^{N - 2} {{\cal L}_{\cal C} {\color{black}{\cal L}_{f,x}^i} } {||x\left( {k + 1} \right) - \hat x\left( {1|k} \right)||} \\
	&~~- {\cal C}\left( {x\left( k \right),{b_0}} \right)- {{\cal C}_N}\left( {\hat x\left( {N|x\left( k \right),B\left( k \right)} \right)} \right)\\
	&~~+ {\cal C}\left( {\hat x\left( {N - 1|x\left( {k + 1} \right),u_0^b(\hat x(k + N))} \right)} \right)\\
	&~~+ {{\cal C}_N}\left( {\hat x\left( {N|x\left( {k + 1} \right),{B^ + }\left( {k + 1} \right)} \right)} \right)
	\end{aligned}
	\end{equation}
	By Assumption \ref{Lipschitz of f},$||x\left( {k + 1} \right) - \hat x\left( {1|k} \right)||\le \varepsilon$, thus  
	\begin{equation*}
	\sum\limits_{i = 0}^{N - 2} {{\cal L}_{\cal C} {\color{black}{\cal L}_{f,x}^i} } ||x\left( {k + 1} \right) - \hat x\left( {1|k} \right)||\le \eta \left( \varepsilon  \right)
	\end{equation*} 
    for some class $\mathcal{K}$ function $\eta$.
	%where $\varpi\ge 0$ is a upper boundary of the effect caused by disturbance $w$ within finite predictive horizon.
    From Assumption \ref{terminal assumption}
	\begin{equation}
	\begin{aligned}
	{{\cal V}_N}\left( {x(k + 1),{B^ + }\left( {k + 1} \right)} \right)& - {{\cal V}_N}\left( {x(k),B(k)} \right)\\
	&\le  - {\cal C}\left( {x\left( k \right),{b_0}} \right)+\eta \left( \varepsilon  \right)
	\end{aligned}
	\end{equation}
	Noticing that
	\begin{equation}
	\begin{aligned}
	&{{\cal V}_N}\left( {x(k + 1),B\left( {k + 1} \right)} \right) 
	= \min {{\cal V}_N}(x(k + 1),U) \\
	&~~~~~~~~~~~~~~~~~~~~~~~~~\le {{\cal V}_N}\left( {x(k + 1),{B^ + }\left( {k + 1} \right)} \right)
	\end{aligned}
	\end{equation}
	it follows 
	\begin{equation}\label{VN(k+1)-VN(k)}
	\begin{aligned}
	{{\cal V}_N}\left( {x(k + 1),B\left( {k + 1} \right)} \right) &- {{\cal V}_N}\left( {x(k),B(k)} \right)\\
	&\le  - {\cal C}\left( {x\left( k \right),{b_0}} \right)+\eta \left( \varepsilon  \right)
	\end{aligned}
	\end{equation}
	The proof is completed.	
\end{proof}
Theorem \ref{theorem 1} indicates that the state $x(k)$ will converge to a bounded neighborhood of the origin constrained by the disturbance amplitude. If disturbance vanishes, the state $x(k)$  will convergence to the origin.
%
%From (\ref{VN(k+1)-VN(k)}) we know that the cost $\mathcal{V}_N(x(k),B(k))$ is not always decreasing due to continuous disturbances. It follows a corollary to show the opportunity when the state asymptotically converges to the equilibrium point.
%\begin{corollary}
%	When disturbance $w$ is small enough such that \[\varpi  \le \mathcal C\left( {x\left( k \right),{b_0}} \right) + {{\bar { \mathcal C}}_N},\forall x \in {\mathbb{R}^n}\]
%	with an arbitrarily small positive scalar ${{\bar { \mathcal C}}_N}$, the cost $\mathcal{V}_N(x(k),B(k))$ converges to the origin.
%\end{corollary}
%
%Based on this result, it is possible to prove that the sytem state converges to the desired setpoint using standard techniques \cite{chen1998quasi,rawlings2017model}.
Besides, it is worth mentioning that no assumptions are made on the cloud and edge controllers.
In particular, it is not required that the cost is non-increasing under the edge or the cloud controller at any time instant. The use of a robust but simple on-board controller is enough to guarantee the stability of the overall scheme.
Remarkably, the assumptions pose conditions on the on-board controller and the cost taking into account the effects of model uncertainties and external disturbances, while no conditions are made on the communications.
The proposed framework is able to obtain safe evolution thanks to the on-board controller in any conditions while it can achieve higher performances in good channel conditions by exploiting the edge and/or the cloud.

\section{Simulations}\label{test}

We consider an industrial warehouse environment where a 4-wheel mobile robot is employed to move objects.
The well-known kinematic bicycle model is used to model the robot. At discrete-time, the model is
\begin{equation}\label{nonlinear discrete nodel}
x(k+1)= f(x(k),u(k)), 
\end{equation}
with $x = {\left[ {{p_x},{p_y},\varphi ,v} \right]^T}$ and $u = {\left[ {\beta ,a} \right]^T}$, defined as
\begin{align}
&p_x(k\!+\!1) = p_x(k)  + T v(k) \cos(\varphi(k) + \beta(k) ) \\
&p_y(k\!+\!1) = p_y(k)  + T v(k) \sin(\varphi(k) + \beta(k) ) \\
&\varphi(k\!+\!1) = \varphi(k) + T\frac{v(k)}{L_r} \sin(\beta(k)) \\
&v(k\!+\!1) = v(k) + T a(k)
\end{align}
with
\begin{align}
\beta(k) = \tan^{-1} \left( \frac{L_r}{L_r + L_f} \tan(\delta(k)) \right)
\end{align}
where $p=[p_x,p_y]$ is the position, $v$ is the speed, $a$ is the acceleration, $\varphi$ is the heading angle, $\beta$ is the slip angle, and $\delta$ is the steering angle.
See [\cite{ge2021numerically}].

The robot is required to reach the desired position $p^*$ and a fixed obstacle $\mathcal{O}$ is present. 
The objective is to accomplish the navigation task while minimizing the cost with
\begin{equation}
{\cal C}(p,u) = {\left\| {p - {p^*}} \right\|_Q} + {\rm{ }}{\left\| u \right\|_R} + c(p,{{\cal O}})
\end{equation}
where $c(p,\mathcal{O})$ is a penalty term to avoid the obstacle, i.e. $c(p,\mathcal{O}) = M  \gg 0 $ if $p \in \mathcal{O}$ and $0$ otherwise.

The cloud controller implements a MPC based the nominal nonlinear model. The horizon is $N$ and the stage cost is 
\begin{equation}
\mathcal{C}_\text{cloud}(p,u) = \mathcal{C}(p,u)
\end{equation}
%The control sequence $U_c(x)$ is set equal to the sequence $U=\{u_0,\dots, u_{N-1}\}$ that minimizes the cost function  
%\begin{equation}
%\mathcal{V}(x,U) = \sum_{i=0}^{N-1} \mathcal{C}_\text{cloud}(\hat{p}(i|x,U),u_i) \quad
%\end{equation}
%where $\hat{p}(i|x,U)$ is the nominal predicted position $i$ steps ahead starting from state $x$ with input sequence $U$ using the ideal nonlinear model.
%Obstacle avoidance is enforced by including suitable non-convex constraints in the MPC problem.
Due to the nonlinearity of the model and nonlinearity of the cost function, this results into a complex nonlinear programming problem.

The edge controller implements a simplified linear MPC. In particular, since the penalty term in the cost might be intractable with the edge computational capabilities, the stage cost is set as
\begin{equation*}
\mathcal{C}_\text{edge}(p,u) ={\left\| {p - {p^*}} \right\|_Q} + {\rm{ }}{\left\| u \right\|_R} + c_\text{edge}(p,\mathcal{O}) 
\end{equation*} 
with
\begin{equation*}
c_{\text{edge}}(p,\mathcal{O}) = c e^{-\kappa({\left\|p-\tilde p \right\|} - r)}
\end{equation*}  
where $\tilde p$ is the center of the obstacle, $r$ is the radius, $\kappa$ is decay rate, and $c$ is the cost on the boundary of the obstacle. 
Moreover, the linearized model is used. 
%Moreover, since nonlinear models are particularly complex to handle in the optimization problem, the linearized model is used. We consider the trajectory $(\bar{x}(i),\bar{u}(i))$ starting from the initial state $\bar{x}(0)=x$ and with constant input $\bar u(i) =\bar{u}= {\left[ {0,0} \right]^T}$.
%The control sequence $U_e(x)$ is set equal to the sequence $U=\{u_0,\dots, u_{N-1}\}$ that minimizes the cost function
%\begin{equation*}
%\mathcal{V}_\text{edge}(x,U) = \sum_{i=0}^{N-1} \mathcal{C}_\text{edge}(\hat{p}(i|x,U),u_i)
%\end{equation*}
%where the prediction $\hat{p}(i|x,U)$ is defined as 
%\begin{equation*}
%\hat{p}(i|x,U)= \bar{p}(i) + {e}_p(i)
%\end{equation*}
%where $\bar p(i)$ are the first two components of $\bar{x}(i)$ and ${e}_p(i)$ are the first two components of $e_x(i)$ with
%\begin{equation*}
%{e}_x(i+1) = Ae_x(i)+Bu_i 
%\end{equation*}
%starting from $e_x(0)=0$.
%With suitable manipulations, 
The resulting problem is a convex optimization problem, which can be efficiently solved by the edge device.

The on-board controller implements the state feedback 
\begin{align*}
a(k) &= {K_p}\left( {v(k) - \sqrt {v_x^2(k) + v_y^2(k)} } \right)\\
\beta (k) &= {\tan ^{ - 1}}\left( {\frac{{{v_y}(k)}}{{{v_x}(k)}}} \right)- {\varphi(k)}
\end{align*}
where $v_x(k)$ and $v_y(k)$ are defined as
\begin{align*}
{v_x}(k) = p_x^* - {p_x}(k),~{v_y}(k) = p_y^* - {p_y}(k)
\end{align*}
The use of destination $p^*$ in the control law might lead to constraint violation. To avoid this, we consider a simple routine that gives an intermediate setpoint taken from a desired path enough in the future.
In particular, the desired path consists of arcs and line segments chosen to guarantee convergence and obstacle avoidance with a suitable safety margin. 
%The arc shares the same center as the obstacle, with its radius being the smaller distance between the center and either the start or end point.  It spans the central angle from the start to the end point, while the remaining part is a line segment. A sequence of setpoints $\bar p^*_i$ is dynamically generated along the reference trajectory from the current location to the destination.

The sampling period is $T=0.01$. 
%Obstacle $\mathcal{O}$ is circle with radius $r=7.5$ and center $\tilde p=(17.5,15)$. 
%The initial condition is $x(0)=[5,25,0,10]^T$, $u(0)=[0,0]^T$, destination $p^*=(35,15)$ with a tolerance of $1$.
The cost matrices are $Q = diag(0.1,0.1)$, $R =diag(1.5, 1.5)$, the prediction horizon is $N=25$, and the penalty on the obstacle is $M=1000$.
For the edge controller, the cost on the boundary of the obstacle is $c=5000$, the decay rates is $\kappa=0.1$. For the on-board controller, we set $K_p=0.009$.
%The reference trajectory consists of arcs and line segments. The arc shares the same center as the obstacle, with its radius being the smaller distance between the center and either the start or end point. It spans the central angle from the start to the end point, while the remaining part is a line segment. 
We set $D_e=2$ and $D_c=4$.
%To quantitatively compare the evolutions obtained, we consider the total cost incurred during the task.
%It is defined as
%\begin{equation}\label{cost one step}
%\mathcal{V}(x(0),U) = \sum_{k=0}^{T_{f}-1} \mathcal{C}(x(k),u(k))
%\end{equation}
%where $T_f$ is the total time of the task.

%The result holds just for one time. Note that even if the probability of signal loss is the same, the corresponding costs are not completely equal. So, based solely on one  time simulation result, cost does not necessarily have to be monotonic. I only presented the parts that can reflect the characteristics of the proposed control algorithm.
Based on the real Cloud-to-User Latency of AWS and Microsoft Azure [\cite{palumbo2021characterization}], log-normal distribution is used to model the delay in the cloud-plant link, while normal distribution is used to model the delay in the the edge-plant link. We set
\[\begin{array}{l}
{a_e},{d_e} \sim \mathcal{N}\left( {{\mu _e},\sigma _e^2} \right), \quad {a_c},{d_c} \sim \mathrm{Log} \mathcal{N}\left( {{\mu _c},\sigma _c^2} \right)
\end{array}\]
where we vary the parameters $\mu_e,\sigma _e^2,\mu_c,\sigma_c^2$ in order to evaluate different scenarios. 
\textcolor{black}{In the following, instead of reporting the values of parameters, we report the resulting probability of outdated (i.e., lost) packets for the considered threshold $D_c$ and $D_e$. In particular, we use the pair $\{p_c,p_e\}=\{Pr(d_c(k) > D_c),Pr(d_e(k) > D_e)\}$ to represent the considered cases.}

\begin{figure}[t]
	\centering
	\includegraphics[trim={0.5cm 0cm 0.5cm 0.72cm},clip, width=0.9\columnwidth]{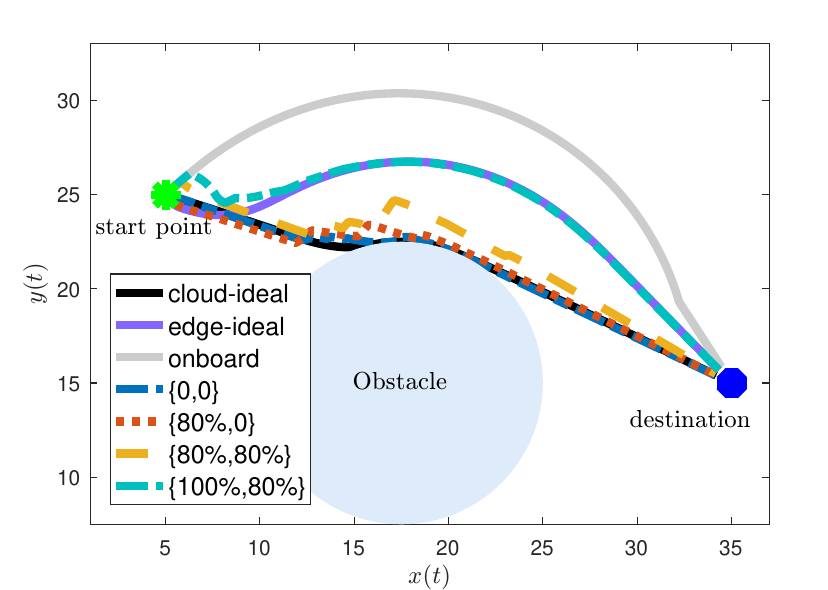} 
	\includegraphics[trim={0.5cm 0cm 0.5cm 0cm},clip, width=0.9\columnwidth]{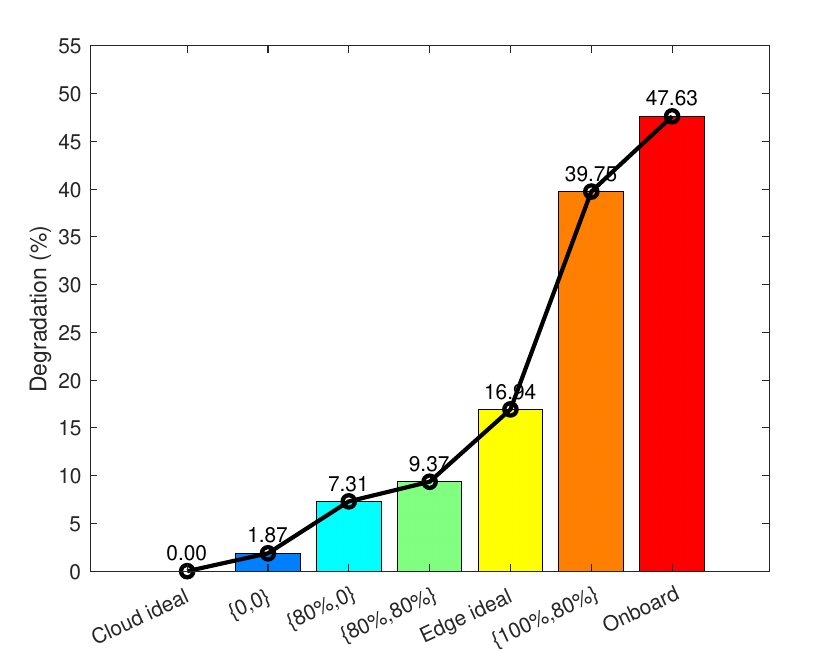}
	\caption{System trajectories (top) and cost degradation with 100 runs Monte Carlo test (bottom) varying channel conditions without disturbances. }%\textcolor{red}{On board?}} 
	\label{total_cost_pktloss_without_disturbances} 
\end{figure}

Figure \ref{total_cost_pktloss_without_disturbances} shows the results of the cloud, edge, on-board controllers considered individually for the nominal system without disturbance (i.e. $w(k)=0$) and of the proposed strategy with lossy channels. It is observed that the powerful computing capabilities of cloud devices and the sophisticated cloud controller are able to obtain ideal performances under ideal channel conditions. Remarkably, the trajectory generated by proposed control strategy under lossy channel is only slightly worse than the ideal behavior with ideal channels. In fact, even in the case of packet loss probability $\{p_c,p_e\}=\{80\%,0\}$, the trajectory follows quite closely the ideal trajectory. It is worth mentioning that in the case of $\{p_c,p_e\}=\{100\%,80\%\}$, that is equivalent to completely disable the communication between the cloud and the plant, the performance of the proposed strategy significantly deteriorates compared to the case of $\{p_c,p_e\}=\{80\%,80\%\}$. These results show the benefits of the three-tiered architecture over a simpler two-tiered architecture. In that case, relying on the edge controller instead of only on the on-board controller is particularly effective.   
Another interesting observation is the impact of the parameter $D_c$ on the control performances. As we can see in Figure  \ref{total_cost_pktloss_without_disturbances}, the degradation of the introducing the prediction horizon $D_c = 4$ when packet loss is not present is about $1.87\%$. This suggests that the open-loop prediction of $4$ steps only marginally affects the system performances while noticeably increasing the robustness against delays.

The comparison between $\{0,0\}$ and $\{80\%,0\}$ shows that even when the probability of cloud signal loss is 0.8 the system performance does not deteriorate significantly. 
The comparison between $\{80\%,80\%\}$ and $\{100\%,80\%\}$ shows that the performance improvement given by cloud is noticeable even when cloud signal is received infrequently. 
Motivated by this outcome, we study how often the cloud controller is applied with respect to the other controllers. We consider the case $\{80\%,80\%\}$. 
%The percentage of times when each controller is selected to be applied is shown in Fig. \ref{proportion}. 
The applied input is selected 82\% of time from the buffer, 10\% from the cloud, 6\% from the edge, and 2\% from the on-board controller. Overall, a sequence generated from the cloud (possibly extracted from the buffer) is applied 83\% of time. 
%If we look at which devices had computed the sequence contained in the buffer when it is selected, we observe that the control sequence obtained by cloud device is applied up to 83\%.
This suggests that the cloud controller is particularly useful even if the packet is often lost.

%\begin{figure}
%	\centering
%	\includegraphics[width =1\columnwidth]{fig/proportion_{0.8,0.8}.eps} \vspace*{-40pt}
%	\caption{Percentage of application of each device under a disturbed channel with $\{0.5, 0.5\}$ loss probability.}
%	\label{proportion} 
%\end{figure}

Figure \ref{total_cost_pktloss_with_disturbances} shows the results in the case of disturbance following the uniform distribution  $w(k)\sim \mathcal{U}(-b,b)$, where $b_x=b_y=0.5$, $b_\varphi=b_v=0.1$. 
Due to the presence of disturbances, the performance of the proposed control strategy slightly deteriorates. However, it still noticeably outperforms the single-tier and double-tier controllers with the edge device.
%The main reason for the performance degradation is the use of nominal MPCs. Indeed, 
Note that, in the presence of disturbances, even with ideal communications, it is not possible to guarantee always obstacle avoidance. A possible solution is to use robust MPCs such as Tube MPC.

\begin{figure}[tb]
	\centering
	\includegraphics[trim={0.25cm 0.25cm 0.5cm 0.8cm},clip, width=0.9\columnwidth]{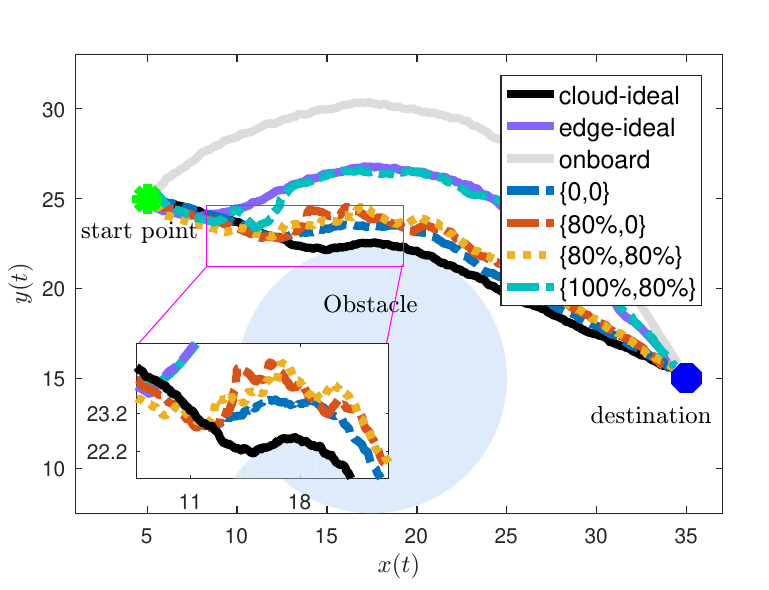}
	%\includegraphics[width =1\columnwidth]{fig/total_cost_pktloss_with_disturbances.eps}
	%\caption{System trajectories (top) and cost degradation (bottom) varying channel conditions with noise}
    \caption{System trajectories with disturbance.}
	\label{total_cost_pktloss_with_disturbances} 
\end{figure}

\section{Conclusions}
In this work, we introduce a multi-tiered controller and a performance-oriented control selection policy to seamlessly exploit multiple computing devices over the edge-cloud continuum. We theoretically investigate the stability of the system and we show the benefits of the proposed setup. Future challenges are the design of multi-tiered constrained controllers and the study of the fundamental trade-offs of the edge-cloud architecture for control.

\bibliography{refs}           

\end{document}